\documentclass[11pt]{article}%
\usepackage[letterpaper,top=1cm, bottom=2cm, left=1.9cm, right=1.9cm]{geometry}
\usepackage{latexsym, amsfonts, amsthm,amssymb,amscd,amsmath}
\usepackage{enumerate}
\usepackage{exscale}
\usepackage{color} 
\usepackage{graphicx}
\usepackage{overpic} 
\usepackage{bm}
\usepackage{bbm}
\usepackage{fancyhdr}
\usepackage{subcaption}
\usepackage[normalem]{ulem}

\providecommand{\U}[1]{\protect\rule{.1in}{.1in}}
\pagenumbering{arabic}
\newtheorem{theorem}{Theorem}[section]

\newtheorem{proposition}[theorem]{Proposition}

\theoremstyle{definition}

\newtheorem{remark}[theorem]{Remark}
\numberwithin{equation}{section}

\def\eps{\varepsilon}

\begin{document}

\title{A central bank strategy for defending  a currency peg}
\author{Eyal Neuman\thanks{Department of Mathematics, Imperial College London, SW7
2AZ London, United Kingdom. E-mail: eyaln13@gmail.com }\setcounter{footnote}{6}
\and Alexander Schied\thanks{Department of Statistics and Actuarial Science,
University of Waterloo,   ON, N2L 3G1, Canada. E-mail: alex.schied@gmail.com }
\and Chengguo Weng\thanks{Department of Statistics and Actuarial Science,
University of Waterloo, ON, N2L 3G1, Canada. E-mail: c2weng@uwaterloo.ca}
\and Xiaole Xue\thanks{School of Management, 
Shandong University, Jinan 250100,  China, and  Department of Statistics and Actuarial Science,
University of Waterloo,   ON, N2L 3G1, Canada. E-mail: xlxue@sdu.edu.cn}
\date{\normalsize First version: November 29, 2019\\ \normalsize This version: August 2, 2020}}
\maketitle

\begin{abstract}
We consider a central bank strategy for maintaining a two-sided currency target zone, in which an exchange rate of two currencies is forced to stay between two thresholds. To keep the exchange rate from breaking the prescribed barriers, the central bank is generating permanent price impact and thereby accumulating inventory in the foreign currency. Historical examples of failed target zones illustrate that this inventory can become problematic, in particular when there is an adverse macroeconomic trend in the market. We model this situation through  a continuous-time market impact model of Almgren--Chriss-type with    drift, in which the exchange rate is a diffusion process controlled by the price impact of the central bank's intervention strategy. The objective of the central bank is to enforce the target zone through a strategy that minimizes the accumulated inventory. We formulate this objective as a stochastic control problem with random time horizon. It is solved by reduction to a singular boundary value problem that was solved by Lasry and Lions (1989). Finally, we provide numerical simulations of optimally controlled exchange rate processes and the corresponding evolution of the central bank inventory.
\end{abstract}

\noindent{\textbf{Key words}.} Currency target zone, currency peg, price impact, central bank intervention, singular stochastic control, second-order differential equation with infinite boundary conditions

\noindent\textbf{MSC subject classifications.}  93E20, 49A60, 91G80

\section{Introduction}

Currency pegs, also called currency target zones, describe a regime in which a central bank imposes upper and/or lower limits on the exchange rate of the domestic currency against a foreign currency. There is an abundance of current and historical examples for such currency pegs, and in many cases, the termination of a currency peg had dramatic  consequences. A particularly notorious example is the exit of the British pound from the European Exchange Rate Mechanism (ERM) on September 16, 1992, a day known as \lq\lq Black Wednesday".

The goal of this note is to describe mathematically the situation that a central bank is facing when defending a currency peg against a macroeconomic trend. More precisely, we consider a  two-sided currency target zone, in which an exchange rate of two currencies is forced to stay between two thresholds. Two current examples of such target zones are shown in Figure \ref{figure1}. 
\begin{figure}[h]
\begin{center}
\begin{minipage}[b]{8.1cm}
\includegraphics[width=8.1cm]{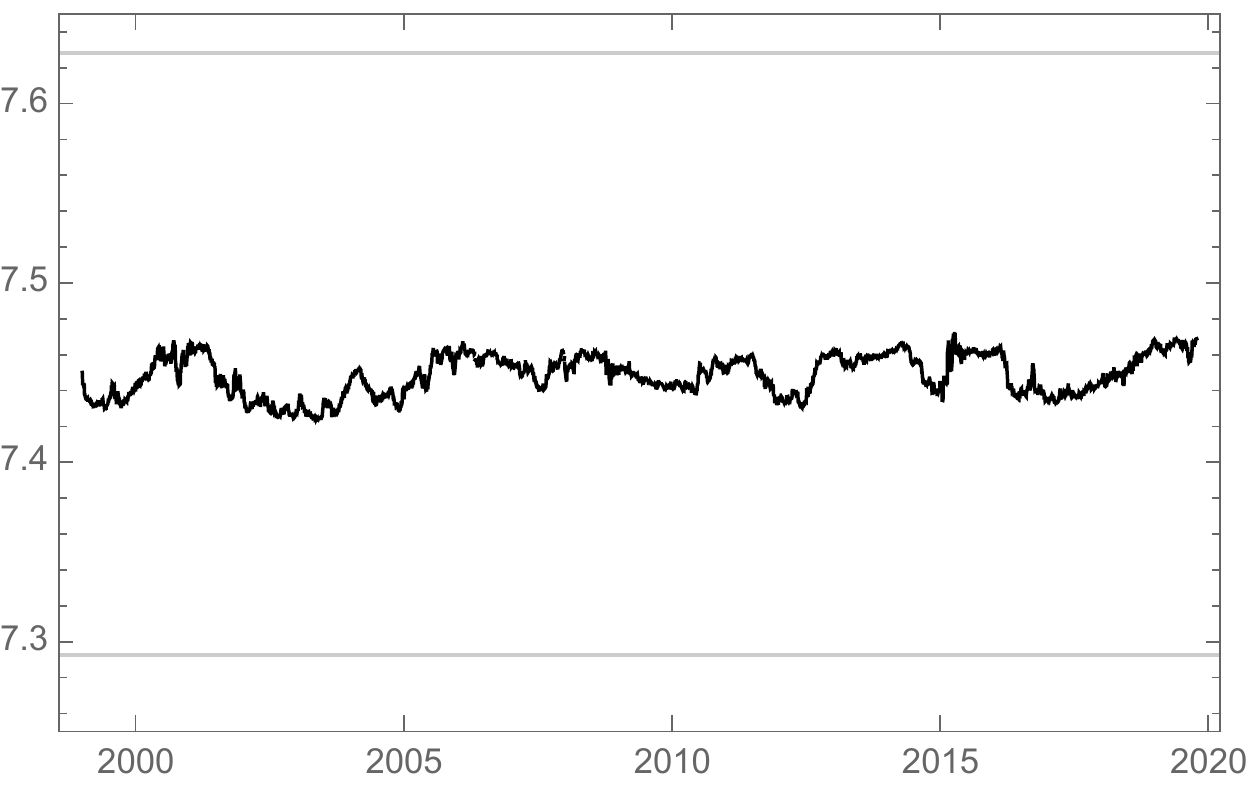}
\end{minipage}\qquad\quad \begin{minipage}[b]{8cm}
\begin{overpic}[width=7.6cm]{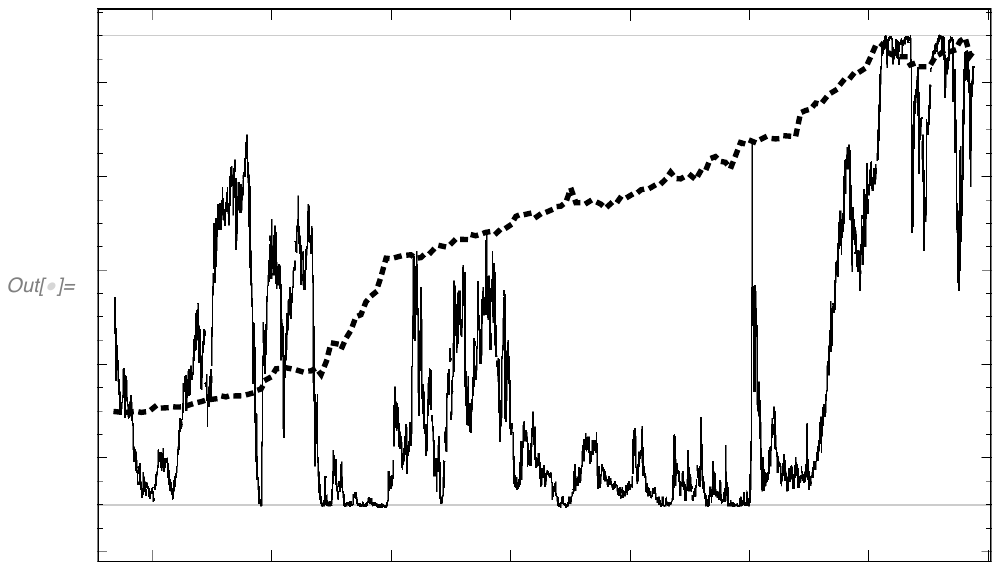}
\put(3.5,-3){\tiny 2006}
\put(17,-3){\tiny 2008}
\put(30,-3){\tiny 2010}
\put(43,-3){\tiny 2012}
\put(56,-3){\tiny 2014}
\put(69.5,-3){\tiny 2016}
\put(83,-3){\tiny 2018}
\put(96,-3){\tiny 2020}
\put(-7,6.5){\tiny 7.75}
\put(-7,32){\tiny 7.80}
\put(-7,58){\tiny 7.85}
\put(-10,63){\tiny USD/HKD}
\end{overpic}\\
 \vspace{-1.1mm}
\end{minipage}
\end{center}
\caption{ Left-hand panel: Since 1 January 1999, the target zone of the EUR/DKK currency pair is defined through the European Exchange Rate Mechanism II (ERM II) as a $\pm2.25\%$ band around a central rate of 7.46038.  Right-hand panel: The USD/HKD currency peg has been in place since 1983. 
Since 18 May 2005, its target zone  is defined through the band from 7.75 to 7.85.  
Note that the EUR/DKK exchange rate stays well away from the boundaries of the target zone, which are indicated by two horizontal gray lines, while the USD/HKD exchange rate spends a significant amount of time in the vicinity of both thresholds. The dotted line indicates Hong Kong's foreign exchange reserves, which increased from USD 122 billion in May 2005 to USD 439 billion in September 2019.   }\label{figure1}
\end{figure}

To keep the exchange rate from breaking the prescribed barriers, the central bank has  two main methods of intervention: 
\begin{enumerate}
\item It can change the interest rates for the domestic currency. 
\item  It can trade domestic versus foreign currency and  thereby generate permanent price impact. 
\end{enumerate}
 The instrument (a) is often fraught with political tensions and might have side effects on the economy as a whole. For instance, it was used only once\footnote{On December  18, 2014, and thus less than one month before the termination of the currency peg, the Swiss National Bank decreased its 3-months Libor target range from  $[0.0\%,0.25\%]$ to  $[-0.75\%,-0.25\%]$.  
Although this decision made news headlines by introducing a regime of strictly negative interest rates,   it had only a very minor effect on the exchange rate, which increased from 1.2010 on December 17 to  1.2039 on December 18.} during the entire duration of  the EUR/CHF currency peg.
 It may also not be very effective, as is illustrated by the fact that the dramatic announcement of increasing the British interest rates to the level of 15\% could not prevent the \lq\lq Black Wednesday" \cite{Guardian}. For this reason, we  will ignore the possible use of interest rate changes and focus instead on instrument (b), the generation of permanent price impact through trading foreign against domestic currency. 
 This instrument has the side effect of accumulating (long or short) inventory in the foreign currency. Historical examples of failed target zones illustrate that this inventory can become problematic, in particular when there is an adverse macroeconomic trend in the market. For instance, it was estimated in \cite{Guardian} that immediately before the British exit from the ERM, \lq\lq 40 per cent of Britain's foreign exchange reserves were spent in frenetic trading".  Another example is the more recent EUR/CHF target zone: In its article \emph{Why the Swiss unpegged the franc}, published on January 18, 2015, 
The Economist wrote:
\begin{quote} Thanks to this policy, by 2014 the SNB had amassed about \$480 billion-worth of foreign currency, a sum equal to about 70\% of Swiss GDP.
\end{quote}
Following the abandonment of the currency peg on January 15, 2015, the inventory of the Swiss National bank (SNB) lost about CHF 78 billion of its value, which is about 12\% of the Swiss GDP. These  figures are taken from Lleo and Ziemba \cite{LleoZiemba}, where one can also find the following quotation regarding the inventory of the Danish central bank accumulated through enforcing the  EUR/DKK target zone as shown in Figure \ref{figure1}:
\begin{quote}
In an effort to defend the peg, the central bank's currency reserves have soared between two-thirds and more than 100\% in recent months. They now amount to more than USD 110 billion, which is about one third of Denmark's 2014 GDP.
\end{quote}

To model permanent price impact in a pegged currency market, we will use a continuous-time market impact model of Almgren--Chriss-type, in which the exchange rate is a diffusion process controlled by the price impact generated by the central bank's strategy. The objective of the central bank is to enforce the target zone through  a strategy that minimizes the inventory in foreign currency. Within our model this objective is then formulated as  a stochastic control problem with random time horizon.  We transform the corresponding Hamilton--Jacobi--Bellman equation into a certain second-order ordinary differential equation with infinite boundary conditions. This boundary value problem is a special case of a class of elliptic partial  differential equations with infinite boundary values for which classical solutions were obtained by Lasry and Lions  \cite{Lasry-Lions}. Our main  result then states that, after reverting the transformation,  this classical solution provides indeed the solution to our optimal control problem. Finally, we provide numerical simulations of the optimally controlled exchange rate processes and the corresponding evolution of the central bank inventory. We illustrate that, depending on the model parameters, the controlled exchange rate process can be both of the type of the EUR/DKK and the USD/HKD exchange rates as shown in Figure \ref{figure1}.

The approach we exploit to defend the target zone is different from most of the literature. Many papers in economics assume that the target zone is defended with infinitesimal interventions at its edges, and no interventions take place when the exchange rate is strictly inside the band. This stream of literature typically considers equilibrium models  and focuses on the study of the exchange rate behavior; see, e.g., \cite{Krugman91}, the survey \cite{Portugal13}, and the references therein.  Some literature adopts the quasi-variational inequalities method to study the optimal impulse control (i.e., the optimal times and the optimal sizes of interventions) to defend a target zone of exchange rate (e.g., \cite{Jeanblanc-Picque1993} 
and \cite{Korn1997}). These papers commonly follow \cite{Krugman91} and describe the dynamics of the exchange rates by a ``fundamental" process plus a term proportional to the percentage change in the exchange rate. They study the optimal interventions on the fundamental process to retain the exchange rates within an exogenously given band at the minimum total transaction cost (including a fixed fee and proportional cost for each intervention). Some more recent literature \cite{Cadenillas-Zapatero1999, Cadenillas-Zapatero2000,Mundaca-Oksendal1998} does not take the target zone as exogenously given, and instead assumes the central bank aims at keeping the exchange rate as close as possible to a given target.  In \cite{Mundaca-Oksendal1998}, the exchange rate is controlled both by interventions in the foreign exchange market (via impulse controls) and by the determination of the domestic interest rates (via stochastic controls). The paper establishes a sufficient condition for a strategy to minimize the sum of  (i) the cost due to the deviations of the exchange rates and the domestic interest rates from their target levels and (ii) the cost due to transaction on the foreign exchange market.  By assuming a quadratic form for the deviation costs and a linear form for the transaction cost in the model of  \cite{Mundaca-Oksendal1998}, \cite{Cadenillas-Zapatero2000} derives a closed-form solution which explicitly indicates that it is optimal to maintain the exchange rate within a finite band.  Finally, in \cite{NeumanSchied18}, a Stackelberg equilibrium between a speculative investor and the central bank maintaining a one-sided target zone is established.

\section{Model setup and main result}\label{sec:mode}

Our goal is to set up a model in which a central bank intervenes in a currency market so as to keep the exchange rate within the domain $D=(\beta_-,\beta_+)$, where $\beta_-<\beta_+$ are two finite barriers. To this end, the exchange rate  will be modeled as a controlled  diffusion process $X^{x,u}=\{X^{x,u}(t)\}_{t\ge0}$, with the control $u$ describing the intervention strategy of the central bank and $x$ denoting the starting point. The intervention of the central bank will occur via the generation of permanent price impact through buying or selling the foreign currency. That is, buying one lot of the foreign currency will increase the exchange rate by a factor $\gamma>0$, whereas selling one lot decreases the exchange rate by the same amount. 
Specifically, let $B=\{B(t)\}_{t\ge0}$ be a standard Brownian motion on the probability space 
 $(\Omega, \mathcal{F},\mathbb{P})$ and let $ \{\mathcal{F}_{t}\}_{t\geq0}$ be right-continuous filtration generated by $B$. Following the continuous-time version \cite{Almgren} of the standard Almgren--Chriss model \cite{AlmgrenChriss2}, we assume that for a given constant $X(0)=x\in D$, the dynamics of $X^{x,u}$ are given by
 \begin{align}
X^{x,u}(t)=x+\sigma B(t)+\int_{0}^{t}b(X^{x,u}(r))\,dr+\gamma\int_{0}^{t}u(r)\,dr,\qquad t\ge0.
\label{eq-rate}
\end{align}
 Here,  $\sigma>0$ is the volatility, the drift $b(\cdot)$ describes a macro-economic trend of the exchange rate, and $u$ is a progressively measurable stochastic control process such that $\int_0^t(u(r))^2\,dr<\infty$ $\mathbb P$-a.s.~for all $t\ge0$ and such that the stochastic differential equation \eqref{eq-rate}  admits a unique  strong solution $X^{x,u}$ satisfying $X^{x,u}(t)\in D$ for all $t\ge0$. Any such control process $u$ will be called \emph{admissible}, and $\mathcal U$ will denote the set of all admissible control processes.

 The drift $b(\cdot)$ describes a  macro-economic tendency of investors to buy or sell the domestic currency. For instance, during the European sovereign debt crisis, such a trend resulted from the desire of many euro investors to seek  a safe haven in the Swiss franc. As a response, the Swiss National Bank introduced a lower currency peg of 1.20 on the EUR/CHF exchange rate, which held from 2011 to 2015. During that time, the Swiss National Bank had to defend the currency peg through purchasing large amounts of foreign currency. In the sequel, it will make sense to think of $b(\cdot)$ as being negative, although this assumption is not necessary from a mathematical perspective. On a technical level, we will assume that $b$ belongs to $C^1(D)$ and that both $b$ and its derivative $b'$ are bounded. A few additional assumptions will be formulated later. 
 
 We assume that the central bank has a random planning horizon $\tau$. This planning horizon will be reached when the currency peg is terminated, modified, or when the macroeconomic trend $b$ changes.   Such a change in the macroeconomic drift can be observed in the USD/HKD target zone in the right-hand panel of Figure \ref{figure1}: For more than a decade, the Monetary Authority of Hong Kong had only to defend the lower barrier of their target zone; in 2018, however, the emergence of the US-China trade war in 2018 changed that picture, and since then, the upper barrier has to be defended.

 Maintaining the target zone is associated with certain costs for the central bank. As discussed in the introduction, one of the main concerns is the accumulation of large inventory in the foreign currency, which presents a large risk in the event of a break-up of the target zone. The discounted total inventory accumulated until the random time $\tau$ equals
 $$\int_0^{\tau} u(t)e^{-\rho t}\,dt,
 $$
 where $\rho>0$ is the discount rate. In addition, the central bank incurs \lq\lq slippage", including transaction costs, instantaneous price impact effects etc., modeled by the cost functional 
  $$\eta\int_0^{\tau} (u(t))^2e^{-\rho t}\,dt,
 $$
where $\eta>0$ is the coefficient of the temporary price impact; see, e.g., \cite{Almgren} or Section 2.2 in \cite{Gatheral} for economic explanations of this cost term in the context of price impact models.  Thus, the central bank has the goal of minimizing the expected costs with random horizon 
\begin{align}
\label{eq-obj-bank-tau}J(x,u)=\mathbb{E}\bigg[\displaystyle\int_{0}^{\tau} \left(
u(t)+\eta(u(t))^{2}\right)  e^{-\rho t}\,dt\bigg]
\end{align}
over $u\in\mathcal U$. It is reasonable to assume that $\tau$ is an independent exponential time with parameter $\theta>0$.
Then the cost function can be rewritten as
\begin{equation}\label{eq-obj-bank}
\begin{split}
J(x,u)&=\mathbb{E}\bigg[\displaystyle\int_{0}^{\infty} \left(
u(t)+\eta(u(t))^{2}\right)  e^{-\rho t}\text{1}_{\{\tau>t\}}\,dt\bigg]\\
&=\mathbb{E}\bigg[\displaystyle\int_{0}^{\infty} \left(
u(t)+\eta(u(t))^{2}\right)  e^{-\lambda t}\,dt\bigg],
\end{split}
\end{equation}
where $\lambda:=\rho+\theta$. Finally, the value function is denoted by 
\begin{equation}\label{value function eq}
V(x)=\inf_{u\in\mathcal U}J(x,u),\qquad x\in D.
\end{equation}

\begin{remark}
In the financial interpretation of this cost functional, we use the above-mentioned idea that the macro-economic trend $b(\cdot)$ should be negative, so that it is primarily the lower barrier, $\beta_-$, that must be defended by the central bank. Therefore, it is  the long inventory in the foreign currency, which is problematic. If, as during the British exit from the ERM, the primary focus is on the upper barrier, $\beta_+$, then   the short inventory in the foreign currency is a concern, and we can simply replace the term $u(t)$ with $-u(t)$. Clearly, both scenarios are mathematically equivalent.
  \end{remark}

The usual heuristic arguments suggest that $V$ should satisfy the following Hamilton--Jacobi--Bellman (HJB) equation:
\begin{equation}\label{HJB rough eq}
\inf_{u\in \mathbb R}\left\{   V'(x)( b(x)+\gamma u)+\frac{1}{2}\sigma^2 V''(x)+u+\eta
u^{2}-\lambda V(x)\right\} =0,\qquad x\in D.
\end{equation}
By computing, at least informally, the infimum over $u$, we obtain the following reduced form of the HJB equation:
\begin{equation}\label{HJB reduced eq}
\frac{1}{2}\sigma^2 V''(x)+ V'(x) b(x)-\lambda
V(x)-\frac{(\gamma V'(x)+1)^2}{4\eta} =0,\qquad x\in D.
\end{equation}
Moreover, the minimizer  provides a candidate for the optimal Markovian strategy,
\begin{equation}\label{optimal strategy eq}
\widehat u(x)=-\frac{\gamma V'(x)+1}{2\eta }, \qquad x\in D.
\end{equation}
In addition to \eqref{HJB reduced eq}, we need to specify boundary values of $V$ at $\partial D=\{\beta_-,\beta_+\}$. Since the goal of the central bank is to keep the exchange rate within $D$, any value outside $ D$ should receive an infinite penalty, which amounts to the singular boundary condition
\begin{equation}\label{boundary condition}
V(x)\longrightarrow+\infty\quad\text{as $x\to\partial D$.}
\end{equation}
Since $D$ is a bounded domain, the boundary condition \eqref{boundary condition} will imply that the candidate strategy $\widehat u$ from \eqref{optimal strategy eq} will satisfy $\widehat u(x)\to+\infty$ as $x\downarrow\beta_-$ and $\widehat u(x)\to-\infty$ as $x\uparrow\beta_+$. Since $D$ is a bounded domain, the boundary condition~\eqref{boundary condition} implies that $V'(x)\to-\infty$ as $x\downarrow\beta_-$ and $ V'(x)\to+\infty$ as $x\uparrow\beta_+$, which yields that  the candidate strategy $\widehat u$ from~\eqref{optimal strategy eq} satisfies $\widehat u(x)\to+\infty$ as $x\downarrow\beta_-$ and $\widehat u(x)\to-\infty$ as $x\uparrow\beta_+$. This property of $\widehat u$  is natural, because only unbounded controls $u$ will be able to keep the controlled diffusion $X^{x,u}(t)$ inside the target zone $ D$ for all $t\ge0$. Now we can formulate our main result.

\begin{theorem}\label{th-1} Suppose that 
 $b(\cdot)$ is bounded,  twice continuously differentiable, and Lipschitz continuous. Then:
 \begin{enumerate}
 \item\label{thm part (a)} The singular boundary value problem \eqref{HJB reduced eq}, \eqref{boundary condition} admits a unique classical solution $V$. Moreover, at the boundary, the solution blows up like the logarithmic distance to $\partial D$, i.e.,
 \begin{equation}\label{V log behavior eq}
\lim_{\eps\downarrow0}\frac{V(\beta_\pm\mp\eps)}{-\log\eps}=\frac{2\sigma^2\eta}{\gamma^2}.
 \end{equation}
 \item The function $V$ from part \ref{thm part (a)} is equal to the value function \eqref{value function eq}. Moreover,   the optimal control is Markovian and given in feedback form by 
 \begin{equation}\label{optimal control}
 u^*(t)=-\frac{\gamma V'(X^{x,u^*}(t))+1}{2\eta }.
 \end{equation}
   \end{enumerate}
 \end{theorem}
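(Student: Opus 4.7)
The plan for part (a) is to rewrite the reduced HJB equation \eqref{HJB reduced eq} by expanding the quadratic as
\[
\tfrac{1}{2}\sigma^2 V''(x) + \Bigl(b(x) - \tfrac{\gamma}{2\eta}\Bigr)V'(x) - \lambda V(x) - \tfrac{\gamma^2}{4\eta}\bigl(V'(x)\bigr)^2 - \tfrac{1}{4\eta} = 0,\qquad x\in D,
\]
which is a second-order elliptic ODE on a bounded interval with coercive quadratic Hamiltonian in $V'$, strictly monotone zeroth-order term $-\lambda V$, and infinite boundary data. This sits squarely in the framework of Lasry--Lions \cite{Lasry-Lions}, whose results deliver a unique classical $C^2(D)$ solution, with uniqueness following from the comparison principle driven by the $-\lambda V$ term. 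For the blow-up rate \eqref{V log behavior eq}, I would proceed by matched asymptotics: the ansatz $V(x)\sim -C\log(x-\beta_-)$ as $x\downarrow\beta_-$ gives $V'\sim -C/(x-\beta_-)$ and $V''\sim C/(x-\beta_-)^2$, and balancing the only two terms of order $(x-\beta_-)^{-2}$, namely $\tfrac{\sigma^2}{2}V''$ and $-\tfrac{\gamma^2}{4\eta}(V')^2$, forces $C=2\sigma^2\eta/\gamma^2$. This formal computation is made rigorous by constructing explicit sub- and super-solutions of the form $-(C\pm\eps)\log(x-\beta_-)+g_\pm(x)$ with $g_\pm$ bounded on $D$ absorbing the lower-order remainders, and applying the comparison principle that already underlies the Lasry--Lions analysis; the symmetric construction works at $\beta_+$.

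For part (b), the strategy is a verification argument. Let $V$ denote the classical solution from part (a) and fix an arbitrary $u\in\mathcal U$. Introduce the localising sequence $\tau_n:=\inf\{t\ge 0:\operatorname{dist}(X^{x,u}(t),\partial D)\le 1/n\}$, apply It\^o's formula to $e^{-\lambda t}V(X^{x,u}(t))$ on $[0,t\wedge\tau_n]$, and invoke the pointwise inequality
\[
(\gamma V'(x)+1)u+\eta u^2\;\ge\;-\tfrac{1}{4\eta}(\gamma V'(x)+1)^2,
\]
with equality iff $u=\widehat u(x)$, together with \eqref{HJB reduced eq}. Since $V'$ is bounded on the compact set $\{\operatorname{dist}(\cdot,\partial D)\ge 1/n\}$, the stochastic integral is a true martingale on $[0,t\wedge\tau_n]$, and taking expectations yields
\[
V(x)\;\le\;\mathbb E\bigl[e^{-\lambda(t\wedge\tau_n)}V(X^{x,u}(t\wedge\tau_n))\bigr]+\mathbb E\!\int_0^{t\wedge\tau_n}\!\!e^{-\lambda r}\bigl(u(r)+\eta u(r)^2\bigr)dr.
\]
Admissibility gives $\tau_n\uparrow+\infty$ a.s., and $V$ is bounded below by $-1/(4\eta\lambda)$ because $u+\eta u^2\ge-1/(4\eta)$. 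Letting $n\to\infty$ and then $t\to\infty$ via Fatou produces $V(x)\le J(x,u)$. For the candidate $u^*$ defined by \eqref{optimal control}, one first shows that the closed-loop SDE admits a unique strong solution confined to $D$ for all $t\ge 0$, and then the same It\^o calculation gives equality, so $V(x)=J(x,u^*)$, establishing both optimality of $u^*$ and the identification of $V$ with the value function.

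The main technical obstacle is the admissibility of the singular feedback control $\widehat u$: one must show that the SDE
\[
dX(t)=\Bigl(b(X(t))-\tfrac{\gamma(\gamma V'(X(t))+1)}{2\eta}\Bigr)dt+\sigma\,dB(t),\qquad X(0)=x\in D,
\]
whose drift blows up at $\partial D$, possesses a global non-exploding strong solution that stays in $D$. My plan is a Khasminskii-type Lyapunov argument using $V$ itself as the Lyapunov function: the HJB equality yields $\mathcal L^{u^*}V=\lambda V-u^*-\eta(u^*)^2$, and applying It\^o to $e^{-\lambda t}V(X^{x,u^*}(t))$ gives an exact identity whose right-hand side is controlled by $V(x)$ plus the accumulated cost. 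Combined with the logarithmic boundary behaviour \eqref{V log behavior eq}, which converts pointwise control of $V(X^{x,u^*})$ into control on the distance to $\partial D$, this rules out the process reaching the boundary in finite time and simultaneously supplies the transversality condition $\mathbb E[e^{-\lambda t}V(X^{x,u^*}(t))]\to 0$ needed to convert the verification inequality into an equality.
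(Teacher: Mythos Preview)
Your treatment of part~(a) and of the inequality $V(x)\ge J(x,u^*)$ (including the Khasminskii argument for non-explosion using $V$ as Lyapunov function) is in line with the paper, which carries out essentially the same steps after first passing to the canonical Lasry--Lions form via the affine substitution of Proposition~\ref{trafo prop}.

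The gap is in the other half of the verification, the inequality $V(x)\le J(x,u)$ for an arbitrary admissible $u$. After It\^o you arrive at
\[
V(x)\;\le\;\mathbb E\bigl[e^{-\lambda(t\wedge\tau_n)}V(X^{x,u}(t\wedge\tau_n))\bigr]+\mathbb E\!\int_0^{t\wedge\tau_n}\!\!e^{-\lambda r}\bigl(u(r)+\eta u(r)^2\bigr)\,dr,
\]
and you claim that Fatou, together with the lower bound $V\ge -1/(4\eta\lambda)$, disposes of the first expectation. But Fatou applied to a sequence bounded from below yields only a \emph{lower} bound on the $\liminf$ of the expectations; to conclude $V(x)\le J(x,u)$ you need an \emph{upper} bound on that term, ideally that it tends to zero. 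Since $V$ blows up at $\partial D$ and an admissible $u$ may well drive $X^{x,u}$ arbitrarily close to the boundary with positive probability (so that $V(X^{x,u}(t\wedge\tau_n))$ is of order $\log n$ on a set of nontrivial mass), no such upper bound is available from your argument. Your justification of the lower bound on $V$ is also circular as written, since it presupposes that $V$ is the value function; the bound is true, but it follows from the PDE at an interior minimum, not from the cost representation.

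The paper avoids this difficulty by not applying It\^o to $V$ at all in this step. From the Lasry--Lions construction one has, for each $n$, a solution $W_n$ of the transformed equation with \emph{finite} boundary values $W_n(\beta_\pm)=n$, and $W_n\uparrow W$. The corresponding $V_n\in W^{2,2}(D)\cap C^1(\overline D)$ are bounded on $\overline D$ and still satisfy the HJB equation on $D$, so It\^o applied to $e^{-\lambda t}V_n(X^{x,u}(t))$ gives a boundary term dominated by $\sup_{\overline D}|V_n|\cdot\mathbb E[e^{-\lambda\tau_\delta}]\to0$ as $\delta\downarrow0$. This yields $V_n(x)\le J(x,u)$ for every $n$, and then $V(x)\le J(x,u)$ by monotone convergence. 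You should replace your Fatou step by this approximation argument.
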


Part \ref{thm part (a)} of Theorem \ref{th-1}  is based  on a transformation of the singular boundary value problem \eqref{HJB reduced eq}, \eqref{boundary condition} into a related equation for which existence and uniqueness of classical solutions were studied by Lasry and Lions \cite{Lasry-Lions}. This transformation is summarized in the following proposition.

\begin{proposition}\label{trafo prop}Under the same assumptions as Theorem~\ref{th-1}, fix some $x_0\in D$. A function $V$ solves \eqref{HJB reduced eq} if and only if 
\begin{equation}\label{VW relation}
W(x):=\frac\gamma{2\eta\sigma^2}\Big(\gamma V(x)+x-\frac{2\eta}\gamma\int_{x_0}^xb(y)\,dy\Big)
\end{equation}
solves 
\begin{equation}\label{LL eq}
-W''(x)+\big(W'(x)\big)^2+\frac{2\lambda}{\sigma^2}W(x)=f(x),\qquad x\in D,
\end{equation}
for
\begin{equation}\label{f eq}
f(x)=-\frac{\gamma}{\sigma^4\eta}b(x)+\frac{1}{\sigma^2}b'(x)+\frac{1}{\sigma^4}(b(x))^2+\frac{\gamma \lambda}{\sigma^4\eta}x-\frac{2\lambda}{\sigma^4}\int_{ x_0}^xb(y)\,dy.
\end{equation}
\end{proposition}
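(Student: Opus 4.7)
My proof strategy is by direct substitution. The plan is to differentiate \eqref{VW relation} twice, express $V$, $V'$, $V''$ in terms of $W$, $W'$, $W''$ (together with the data $b$, $b'$, $x$, and the antiderivative $\int_{x_0}^x b(y)\,dy$), substitute these into \eqref{HJB reduced eq}, and verify term by term that the result is exactly \eqref{LL eq} with $f$ given by \eqref{f eq}. Since for any fixed $x_0$ the map $V \mapsto W$ in \eqref{VW relation} is a bijection (it is $V$ rescaled plus an explicit function of $x$), both directions of the equivalence follow from a single chain of identities.

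The only non-mechanical point, and the one that motivates the specific form of the transformation \eqref{VW relation}, is the following. Differentiating \eqref{VW relation} yields
\[
W'(x)=\frac{\gamma^2}{2\eta\sigma^2}V'(x)+\frac{\gamma}{2\eta\sigma^2}-\frac{b(x)}{\sigma^2},
\]
which rearranges to the crucial identity $\gamma V'(x)+1=\frac{2\eta}{\gamma}\bigl(\sigma^2 W'(x)+b(x)\bigr)$. Squaring and dividing by $4\eta$ gives
\[
\frac{(\gamma V'(x)+1)^2}{4\eta}=\frac{\eta\sigma^4}{\gamma^2}(W'(x))^2+\frac{2\eta\sigma^2}{\gamma^2}b(x)W'(x)+\frac{\eta}{\gamma^2}(b(x))^2.
\]
When this is substituted into \eqref{HJB reduced eq}, the cross term $\frac{2\eta\sigma^2}{\gamma^2}b(x)W'(x)$ cancels exactly against the corresponding contribution produced by the drift term $V'(x)b(x)$ once $V'$ is rewritten in terms of $W'$. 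This cancellation is precisely what renders \eqref{LL eq} free of any first-order derivative and leaves a pure $(W')^2$ nonlinearity.

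The rest is bookkeeping. A second differentiation gives $V''(x)=\frac{2\eta\sigma^2}{\gamma^2}W''(x)+\frac{2\eta}{\gamma^2}b'(x)$, and \eqref{VW relation} itself yields $V(x)=\frac{2\eta\sigma^2}{\gamma^2}W(x)-\frac{x}{\gamma}+\frac{2\eta}{\gamma^2}\int_{x_0}^x b(y)\,dy$. Substituting these three expressions into \eqref{HJB reduced eq}, exploiting the $bW'$ cancellation above, and finally multiplying through by $-\gamma^2/(\eta\sigma^4)$ produces \eqref{LL eq}, with all the residual lower-order terms in $b$, $b'$, $b^2$, $x$, and $\int_{x_0}^x b(y)\,dy$ assembled exactly into the function $f$ of \eqref{f eq}. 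I do not anticipate any genuine obstacle: the form of the transformation is uniquely dictated by the twin requirements that the output equation be drift-free and have the Lasry--Lions quadratic nonlinearity $(W')^2$, so once those structural conditions are enforced the verification reduces to collecting coefficients.
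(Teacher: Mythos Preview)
Your proposal is correct and essentially matches the paper's argument. The paper organizes the same computation by first completing the square in \eqref{HJB reduced eq} to obtain the term $\frac1{4\eta}\big(\gamma V'+1-\frac{2\eta}\gamma b\big)^2$, which is exactly your identity $\gamma V'+1=\frac{2\eta}\gamma(\sigma^2 W'+b)$ read in reverse; after that both proofs substitute $V,V',V''$ in terms of $W,W',W''$ and divide through by $-\sigma^4\eta/\gamma^2$.
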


\section{{Numerical experiments}}

A first initial guess for solving the singular boundary value problem \eqref{HJB reduced eq}, \eqref{boundary condition} is to replace the singular boundary condition \eqref{boundary condition}
 with large but finite boundary values. This is indeed possible, as  it was shown in \cite{Lasry-Lions} that the boundary value problem defined through \eqref{LL eq} and $W(\beta_\pm)=R$ admits at least a weak solution for each $R>0$. By Proposition \ref{trafo prop}, this solution can then be transformed into a solution of our original differential equation \eqref{HJB reduced eq}. We found, however, that this method is numerically unstable for large $R$. Moreover, it  does not accurately reflect the blow-up of solutions at the boundary, which is important for the requirement that the controlled diffusion stays within the domain $D=(\beta_-,\beta_+)$. 
  To overcome these disadvantages, we used the following method, which provides a numerical solution that retains the boundary behavior of $V$. 
  
   Let $d(\cdot)$ be any function in $C^2(\overline D)$ that satisfies $d(x)>0$ for all $x\in D$, $d(\beta_\pm)=0$, $d'(\beta_-)=+1$, and $d'(\beta_+)=-1$. For instance, one can take 
   \begin{equation}\label{quadratic d}
   d(x)=\frac1{\beta_+-\beta_-}\big(-\beta_-\beta_++(\beta_-+\beta_+)x-x^2\big)
   \end{equation}
   or
$$d(x)=\frac{\beta_+-\beta_-}\pi\sin\Big(\frac{\pi(x-\beta_-)}{\beta_+-\beta_-}\Big).
$$
Then it follows from  \eqref{V log behavior eq} that, when approaching the boundary $\partial D$, the value function behaves like a constant times $({-\log d(x)})$. Hence, if we define 
$$U(x):=\frac{V(x)}{-\log d(x)},
$$
then the equation \eqref{HJB reduced eq} is equivalent to the following equation for $U$,
\begin{equation}\label{U ODE}
\begin{split}
&4 \eta  b    \left(U  dd' +U'd^2  \log d   \right)+\left(\gamma 
   U  d' +  \gamma U' d \log d  -d\right)^2\\
&\quad   +2 \eta  \sigma^2
   \left(  2 dd'  U' + U''d^2 \log d  +U  \left(d 
   d'' -(d')^2\right)\right)-4 \eta  \lambda   U  d ^2\log d =0,
\end{split}
\end{equation}
and the asymptotic boundary behavior \eqref{V log behavior eq} translates into the following two-point boundary condition,
\begin{equation}\label{U boundary cond}
U(\beta_\pm)=\frac{2\sigma^2\eta}{\gamma^2}.
\end{equation}
After solving the boundary value problem \eqref{U ODE} and \eqref{U boundary cond} numerically, a numerical solution for the boundary value problem \eqref{HJB reduced eq}, \eqref{boundary condition}  is then obtained by setting 
\begin{equation}\label{log trafo}
V(x):=-U(x)\log d(x)
\end{equation}
 and the optimal Markovian control is given by
$$\widehat u(x)=\frac{\gamma }{2\eta}\Big(\frac{U(x)d'(x)}{d(x)}+ U'(x)\log d(x)-\frac1\gamma\Big)
$$
This method clearly retains the blow-up rate \eqref{V log behavior eq} of $V$.

For solving the auxiliary boundary value problem  \eqref{U ODE} and \eqref{U boundary cond} numerically, we chose the function $d$ as in \eqref{quadratic d}. To avoid the singularity of the coefficients of the differential equation \eqref{U ODE} at the boundary, however, we replaced $d(x)$ with $d_\eps(x):=\eps+(1-\eps)d(x)$ in \eqref{U ODE}, but not in \eqref{log trafo}; in our simulations we took $\eps=0.00176$. Then we  used the Mathematica routine {\tt NDSolve} to numerically solve the corresponding boundary value problem. The corresponding exchange rate and inventory processes can then be simulated by means of a standard Euler scheme. When doing so, it is important to choose a small step size, so as to avoid that increments can jump over the barriers $\beta_-$ and $\beta_+$. In our simulations, we took a step size of $1.6\times 10^{-6}$.

The results of our simulations are shown in Figures \ref{figure2}
 and \ref{figure3}. These two figures illustrate that, depending on the model parameters, our central bank strategies can exhibit the features of both the EUR/DKK and the USD/HKK exchange rates as shown in Figure \ref{figure1}. That is, in Figure \ref{figure2}, the controlled diffusion stays well away from the boundary $\partial D=\{\beta_-,\beta_+\}$, whereas in Figure \ref{figure3}, the process spends a significant amount in the vicinity of the lower bound $\beta_-$. Moreover,  in Figure \ref{figure3}, the inventory process appears to be close to  the local time of a diffusion reflecting at $\beta_-$, with essentially linear decay during excursions from the reflecting barrier. This can justify the common use of reflecting diffusions in target zone models; see, e.g.,  \cite{Ball-Roma98,NeumanSchied} and the references therein. We mention finally that by changing the model parameters  it is also possible to obtain less regularly shaped solutions $V$ and in particular non-monotone Markovian controls $\widehat u$. Our choice of the plots in Figures \ref{figure2}
 and \ref{figure3} was motivated by their similarity to the historical EUR/DKK and USD/HKD in Figure \ref{figure1}.

\begin{figure}[h]
\begin{center}
\begin{minipage}[b]{6cm}
\includegraphics[width=6cm]{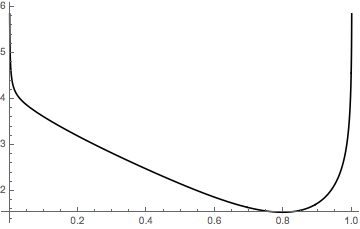}
\end{minipage}\qquad \begin{minipage}[b]{8cm}
\includegraphics[width=6cm]{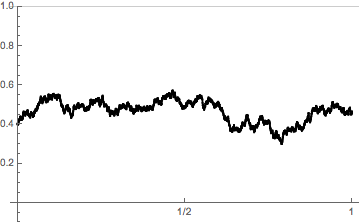}
\end{minipage}\\
\begin{minipage}[b]{6cm}
\includegraphics[width=6cm]{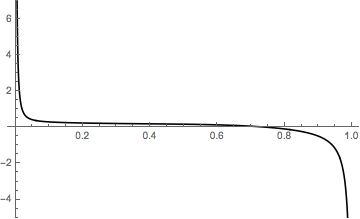}
\end{minipage}\qquad \begin{minipage}[b]{8cm}
\includegraphics[width=6cm]{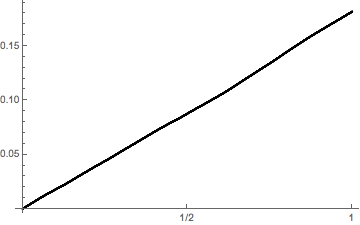}
\end{minipage}
\end{center}
\caption{ Solution $V$ (top left), optimal Markovian control $\widehat u$ (bottom left), as well as a sample path of the optimally controlled exchange rate, $X^{0.4,u^*}(t)$ (top right),  and the corresponding inventory $\int_0^tu^*(s)\,ds$ (bottom right), for $0\le t\le1$. The parameters are $\beta_-=0$, $\beta_+=1$, $\sigma=0.25$, $\eta=6$, $\gamma=1$, $\lambda=0.5$, and $b(x)=-(1-x^2)/2$. }\label{figure2}
\end{figure}

\begin{figure}[h!]
\begin{center}
\begin{minipage}[b]{6cm}
\includegraphics[width=6cm]{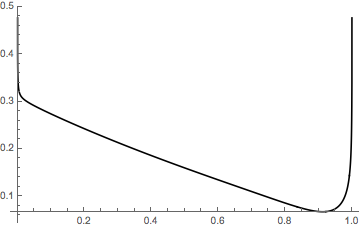}
\end{minipage}\qquad \begin{minipage}[b]{8cm}
\includegraphics[width=6cm]{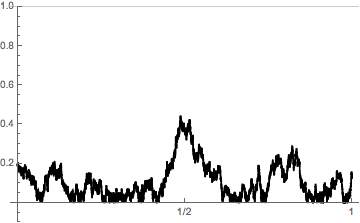}
\end{minipage}\\
\begin{minipage}[b]{6cm}
\includegraphics[width=6cm]{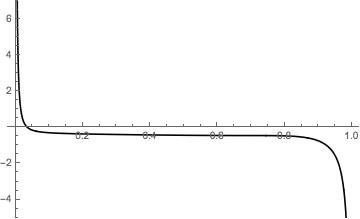}
\end{minipage}\qquad \begin{minipage}[b]{8cm}
\includegraphics[width=6cm]{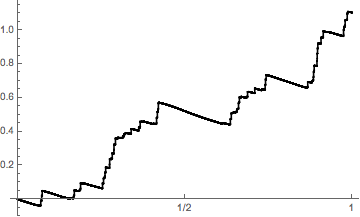}
\end{minipage}
\end{center}
\caption{ Solution $V$ (top left), optimal Markovian control $\widehat u$ (bottom left), as well as a sample path of the optimally controlled exchange rate, $X^{0.2,u^*}(t)$ (top right),  and the corresponding inventory $\int_0^tu^*(s)\,ds$ (bottom right), for $0\le t\le1$.  The parameters are $\beta_-=0$, $\beta_+=1$, $\sigma=0.4$, $\eta=0.6$, $\gamma=2$, $\lambda=1$, and $b(x)=-(1-x^6)$. }\label{figure3}
\end{figure}

\section{Proofs}

\begin{proof}[Proof of Proposition \ref{trafo prop}] By completing the squares, we see that   \eqref{HJB reduced eq} is equivalent to
\begin{equation}\label{HJB completed squares eq}
\frac{1}{2}\sigma^2 V''(x)-\frac1{4\eta}\Big(\gamma V'(x) +1-\frac{2\eta}\gamma  b(x)\Big)^2 -\frac1\gamma b(x) +\frac\eta{\gamma^2}(b(x))^2 -\lambda
V(x)=0.
\end{equation}
From the given relation \eqref{VW relation} between $V$ and $W$, we get
$$
V(x)=\frac1\gamma\Big(\frac{2\eta\sigma^2}\gamma W(x)-x+\frac{2\eta}\gamma\int_{x_0}^xb(y)\,dy\Big).
$$ 
Taking the first two derivatives of the right-hand side of the preceding equation and plugging them into \eqref{HJB completed squares eq} yields that $V$ satisfies \eqref{HJB completed squares eq}  if and only if $W$ satisfies
$$\frac{\sigma^4\eta}{\gamma^2} W''(x)-\frac{\sigma^4\eta}{\gamma^2}\big(W'(x)\big)^2-\frac{2\sigma^2\eta \lambda}{\gamma^2}W(x)=\frac1\gamma b(x)-\frac{\eta \sigma^2}{\gamma^2}b'(x)-\frac{\eta}{\gamma^2}(b(x))^2-\frac{\lambda }{\gamma} x+\frac{2\lambda\eta}{\gamma^2}\int_{x_0}^xb(y)\,dy.
$$
Dividing by $-{\sigma^4\eta}/\gamma^2$ shows that the preceding equation is the same as \eqref{LL eq}.\end{proof}

\begin{proof}[Proof of Theorem \ref{th-1}] Our assumptions imply that the function $f$ defined in \eqref{f eq} is bounded and continuously differentiable. Under this condition, 
the existence of a unique classical solution $W$ to \eqref{LL eq} with boundary conditions
\begin{equation}
\lim_{\eps\downarrow0}\frac{W(\beta_\pm\mp\eps)}{\log\eps}=-1
 \end{equation}
was shown in 
\cite[Theorem I.1]{Lasry-Lions}. Part \ref{thm part (a)} of the assertion hence follows from Proposition \ref{trafo prop}.

The proof of part (b) is based on the ideas from \cite[Section VII]{Lasry-Lions}. Note, however, that the quadratic case, which is the one that is relevant for us here, is not included in  the statement of the results from \cite{Lasry-Lions}.

Take a fixed starting point $x\in D$. Let $V$ denote the solution of the singular boundary value problem \eqref{HJB reduced eq}, \eqref{boundary condition}  as provided by part \ref{thm part (a)} and 
$\widehat u(x)=-\frac1{2\eta }(\gamma V'(x)+1)$ so that $u^*(t)=\widehat u(X^{x,u^*}(t))$ by \eqref{optimal control}.  We show first that $u^*$  is indeed an admissible control, i.e., $u^*\in\mathcal U$. For  $\delta>0$, let  $D_{\delta}=(\beta_-+\delta,\beta_+-\delta)$ and  $\tau_{\delta}$ be the first exit
time of $X^{x,u^*}$ from $D_\delta$, where $\delta$ should be small enough so that $x\in D_\delta$. For the ease of notation, we will also write $X$ for  $X^{x,u^*}$. Note that $V$ and its derivatives are bounded on $D_\delta$. Hence, applying It\^o's formula to
$e^{-\lambda t}V(X(t))$ and taking expectations gives
\begin{equation}
\begin{split}
V(x)&=\mathbb{E}\bigg[\int_{0}^{\tau_{\delta}}e^{-\lambda t}\Big( \lambda V(X(t))- V'(X(t)) [b(X(t))+\gamma\widehat u(X(t))]-\frac{\sigma^2}{2}V''(X(t)) \Big)\,dt\bigg]\\
&\qquad+\mathbb{E}[e^{-\lambda \tau_{\delta}}V(X(\tau_{\delta})) ].\\
\end{split}
\end{equation}
Using the HJB equation \eqref{HJB rough eq} and the fact that $\widehat u$ attains the infimum, we obtain
\begin{align} 
V(x)=\mathbb{E}\bigg[\int_{0}^{\tau_{\delta}}e^{-\lambda t} \left(\widehat u (X(t))+\eta (\widehat u (X(t)))^2 \right)\,dt\bigg]
+\mathbb{E}[e^{-\lambda \tau_{\delta}}V(X(\tau_{\delta})) ].\label{eq-value-u}
\end{align}
The first term on the right-hand side can be estimated as follows,
\begin{align*}
\mathbb{E}\bigg[\int_{0}^{\tau_{\delta}}e^{-\lambda t} \left(\widehat u (X(t))+\eta (\widehat u (X(t)))^2 \right)\,dt\bigg]
&=\mathbb{E}\bigg[\int_{0}^{\tau_{\delta}}e^{-\lambda t}\eta \left(\Big((\widehat u (X(t))+\frac{1}{2\eta}\Big)^2 -\frac{1}{4 \eta^2}\right)\,dt\bigg]\\&\geq -\frac{1}{4 \eta} \mathbb{E}\bigg[\int_{0}^{\tau_{\delta}}e^{-\lambda t}dt\bigg]\geq  -\frac{1}{4 \eta \lambda}.
\end{align*}
The second term on the right-hand side of \eqref{eq-value-u} can estimated as follows,
$$\mathbb{E}[e^{-\lambda \tau_{\delta}}V(X(\tau_{\delta})) ]\ge \big(V(\beta_-+\delta)\wedge V(\beta^+-\delta)\big)\mathbb E[e^{\-\lambda\tau_{\delta}}].
$$
Therefore,
$$V(x)\ge  -\frac1{4\eta\lambda}+ \big(V(\beta_-+\delta)\wedge V(\beta^+-\delta)\big)\mathbb E[e^{\-\lambda\tau_{\delta}}].
$$
Since $V(x)$ is finite in $D_\delta$ and $V(\beta_-+\delta)\wedge V(\beta^+-\delta)\to+\infty$ as $\delta\downarrow0$, we must have  
$$0=\lim_{\delta\downarrow0}\mathbb{E}[e^{-\lambda \tau_{\delta}}]=\mathbb{E}[e^{-\lambda \tau_0}] ,$$
where $\tau_0:=\lim_{\delta\downarrow0}\tau_\delta$ is equal to the exit time of $X$ from $D$. 
Therefore, we must have $\tau_0=\infty$ $\mathbb P$-a.s., and so $u^*$  is admissible.


Now we show that 
\begin{equation}\label{V ge J}
V(x)\ge J(x,u^*)\ge\inf_{u\in\mathcal U}J(x,u).
\end{equation}
The first inequality will yield in particular that $J(x,u^*)$ is finite. For sufficiently small $\delta$, we have  $V(x)\geq 0$ on $D\setminus D_{\delta}$, and thus, by (\ref{eq-value-u}),   
$$
V(x)\geq \mathbb{E}\bigg[\int_{0}^{\tau_{\delta}}e^{-\lambda t} \left(u^*(t)+\eta (u^*(t))^2 \right)\,dt\bigg].
$$
Since $\tau_{\delta} \rightarrow +\infty$ as $\delta\downarrow 0$, we obtain $
V(x)\geq J(x,u^*)$ and thus \eqref{V ge J}. 

In the remainder of the proof, we show
\begin{equation}\label{V le inf J}
V(x)\leq J(x,u), \qquad\text{for all $u\in \mathcal{U}$},
\end{equation}which  in turn gives the optimality of $u^{\ast}$. To this end, we use the fact, established in Step 2 of the proof of \cite[Theorem II.1]{Lasry-Lions}, that for each $n\in\mathbb N$, the following two-point boundary value problem admits a solution $W_n$ in the Sobolev space $W^{2,2}(D)$,
\begin{equation}\label{trun-W}
\begin{cases}\displaystyle
-W_{n}''(x)+\big(W_{n}'(x)\big)^2+\frac{2\lambda}{\sigma^2}W_{n}(x)&\!\!=f(x)\quad\text{for $x\in D$},\\
\hfill W_{n}(\beta_{\pm}) &\!\!=n.\end{cases}
\end{equation}
In our one-dimensional situation, functions in $W^{2,2}(D)$ are almost everywhere differentiable with a square-integrable derivative and hence extend to continuous functions on $\overline D=[\beta_-,\beta_+]$. The same is true of their derivatives, so that $W^{2,2}(D)\subset C^1(\overline D)$. In \cite{Lasry-Lions}, it is shown moreover that the functions $W_n$ increase pointwise to the continuous function $W$, so that $W_n\to W$ uniformly on compact subsets of $D$.

 When defining
$$
V_n(x):=\frac{2\eta\sigma^2}{\gamma^2}W_n(x)-\frac x\gamma+\frac{2\eta}{\gamma^2}\int_{x_0}^xb(y)\,dy,
$$
Proposition \ref{trafo prop} thus yields a sequence of solutions to \eqref{HJB reduced eq} that belong to $W^{2,2}(D)\cap C^1(\overline D)$ and that converge to $V$ uniformly on compact subsets of $D$. Moreover, It\^o's formula also applies to such functions (see, e.g., \cite[Remark 3.2]{FoellmerProtterShiryaev}). 

We now take $u\in\mathcal U$ such that $J(x,u)<\infty$, let $X(t):= X^{x,u}(t)$, apply It\^o's formula to $e^{-\lambda t}V_n(X(t))$, and take expectations to get
\begin{align*}
V_n(x)&=\mathbb E\bigg[e^{-\lambda\tau_\delta}V_n(X(\tau_\delta))-\int_0^{\tau_\delta}e^{-\lambda t}\Big(\big(b(X(t))+\gamma u(t)\big)V_n'(X(t))-\lambda V_n(X(t))+\frac{\sigma^2}2V_n''(X(t))\Big)\,dt\bigg]\\
&\le \mathbb E\big[e^{-\lambda\tau_\delta}V_n(X(\tau_\delta))\big]+\mathbb E\bigg[\int_0^{\tau_\delta}e^{-\lambda t}\Big(u(t)+\eta\big(u(t)\big)^2\Big)\,dt\bigg],
\end{align*}
where we have used \eqref{HJB rough eq} in the second step. Since $\mathbb P$-a.s.~$X(t)\in D$ for all $t\ge0$, we must have $\tau_\delta\to\infty$ as $\delta\downarrow0$, and so 
$$
\big|\mathbb{E}[e^{-\lambda \tau_{\delta}}{V}_n(X(\tau_{\delta})) ]\big|\leq \sup_{y\in \overline{D}} |{V}_n(y)| \cdot \mathbb{E}[e^{-\lambda \tau_{\delta}}] \longrightarrow 0\quad\text{\ as\ } \delta \downarrow 0.
$$
Consequently, 
\begin{equation*}
V_n(x) \leq \mathbb E\bigg[\int_0^{\infty}e^{-\lambda t}\Big(u(t)+\eta\big(u(t)\big)^2\Big)\,dt\bigg]=J(x,u).
\end{equation*}
Sending $n\uparrow \infty$, we obtain \eqref{V le inf J}.  
\end{proof}

\noindent \textbf{Acknowledgement}. E.N.~would like to thank the CFM -- Imperial
Institute of Quantitative Finance which supported this research. A.S.~and X.X.~gratefully acknowledge financial support from the
 Natural Sciences and Engineering Research Council of Canada (NSERC) through grant RGPIN-2017-04054. C.W.~and X.X.~would like to thank the financial support from NSERC through grant RGPIN-2016-04001.
X.X.~is
grateful to  Peng Liu, Peng Luo and Wei Wei for  helpful discussions. All authors thank Abel Cadenillas for kindly pointing out   the references  \cite{Cadenillas-Zapatero1999, Cadenillas-Zapatero2000,Jeanblanc-Picque1993,Korn1997, Mundaca-Oksendal1998}.

\bibliographystyle{abbrv}
\bibliography{MarketImpact}

\end{document}